\newtheorem{theorem}{Theorem}[section]
\newtheorem{corollary}[theorem]{Corollary}
\newtheorem{question}[theorem]{Question}
\newtheorem*{lemma*}{Lemma}
\newcommand{\ket}[1]{| #1 \rangle}
\newcommand{\bra}[1]{\langle #1 |}
\newcommand{\T}{\text{MM}}
\newcommand{\id}{\mathbf{1}}
\newcommand{\tr}{\mathop{\mathrm{tr}}}
\newcommand{\inner}[2]{\left\langle #1 \!\mid\! #2 \right\rangle}
\newcommand{\C}{\mathbb{C}}
\newcommand{\R}{\mathbb{R}}
\newcommand{\Z}{\mathbb{Z}}
\DeclareMathOperator{\GL}{GL}
\title{Designing Strassen's Algorithm}
\author{Joshua A. Grochow\footnote{Departments of Computer Science and Mathematics, University of Colorado, Boulder, CO \texttt{jgrochow@colorado.edu}} \,and Cristopher Moore\footnote{The Santa Fe Institute, Santa Fe, NM \texttt{moore@santafe.edu}}}
\begin{document}
\maketitle

\begin{abstract}
In 1969, Strassen shocked the world by showing that two $n \times n$ matrices could be multiplied in time asymptotically less than $O(n^3)$. While the recursive construction in his algorithm is very clear, the key gain was made by showing that $2 \times 2$ matrix multiplication could be performed with only 7 multiplications instead of 8. The latter construction was arrived at by a process of elimination and appears to come out of thin air. Here, we give the simplest and most transparent proof of Strassen's algorithm that we are aware of, using only a simple unitary 2-design and a few easy lines of calculation. 
Moreover, using basic facts from the representation theory of finite groups, we use 2-designs coming from group orbits to generalize our construction to all $n \geq 2$ (although the resulting algorithms aren't optimal for $n \ge 3$). 
\end{abstract}

\section{Introduction}
The complexity of matrix multiplication is a central question in computational complexity, bearing on the complexity not only of most problems in linear algebra, but also of myriad combinatorial problems, e.g., various shortest path problems~\cite{zwick} and bipartite matching problems~\cite{sankowski}. The main question around matrix multiplication is whether two $n \times n$ matrices can be multiplied in time $O(n^{2 + \varepsilon})$ for every $\varepsilon > 0$. The current best upper bound on this exponent is $2.3728639$~\cite{LeGall}, narrowly beating~\cite{stothers,VW}. The best known lower bound is still only $3n^2 - o(n)$~\cite{landsbergLB}. 

Since Strassen's 1969 paper~\cite{strassen}, which showed how to beat the standard $O(n^3)$ time algorithm, it has been understood that one way to get asymptotic improvements in algorithms for matrix multiplication is to find algebraic algorithms for multiplying small matrices using only a few multiplications, and then to apply these algorithms recursively. 

While the recursive construction in Strassen's algorithm is very clear---treat a $2n \times 2n$ matrix as a $2 \times 2$ matrix each of whose entries is an $n \times n$ matrix---the base case, which accounts for how Strassen was able to beat $O(n^3)$, seems to come out of thin air. Indeed, Strassen was trying to prove, by process of (intelligently exhaustive) elimination, that such an algorithm could \emph{not} exist (e.g.,~\cite[Remark~1.1.1]{landsbergSurvey} or~\cite{liptonRegan}). In his paper it is presented as follows, which ``one easily sees''~\cite[p.~355]{strassen} correctly computes $2 \times 2$ matrix multiplication $C = A B$:
\[
\begin{array}{rclcrcl}
C_{11} & = & I + IV - V + VII && C_{12} & = & III + V \\
C_{21} & = & II + IV && C_{22} & = & I + III - II + VI
\end{array} \, ,
\]
where
\[
\begin{array}{rclcrcl}
I & = & (A_{11} + A_{22})(B_{11} + B_{22}) && V & = & (A_{11} + A_{12})B_{22} \\
II & = & (A_{21} + A_{22}) B_{11} && VI & = & (-A_{11} + A_{21})(B_{11} + B_{12})\\
III & = & A_{11} (B_{12} - B_{22}) && VII & = & (A_{12} - A_{22})(B_{21} + B_{22}) \\
IV & = & A_{22} (-B_{11} + B_{21}) 
\end{array} \, .
\]
While verifying the above by calculation is not difficult---after all, it's only seven multiplications and four linear combinations---it is rather un-illuminating.  In particular, the verification gives no sense of \emph{why} such a decomposition exists.

In this paper, we give a proof of Strassen's algorithm that is the most transparent we are aware of.  The basic idea is to note that Strassen's algorithm has a symmetric group of vectors lurking in it, which form what is known as a (unitary) 2-design.  Using the representation theory of finite groups, we obtain generalizations to higher dimensions, which suggest further directions to explore in our hunt for efficient algorithms.

\subsection{Other explanations of Strassen's algorithm}
Landsberg~\cite[Section~3.8]{landsbergSurvey} points out that Strassen's algorithm could have been anticipated because the border-rank of any $4 \times 4 \times 4$ tensor is at most seven. Although this may lead one to suspect the existence of an algorithm such as Strassen's, it does not give an explanation for the fact that the rank (rather than border-rank) of $2 \times 2$ matrix multiplication is at most seven, nor does it give an explanation of Strassen's particular construction.

Several authors have tried to make Strassen's construction more transparent through various calculations, e.\,g., \cite{gastinel, yuval, chatelin, alekseyev, paterson, GK, minz, CILO}.
While these lend some insight, and some provide proofs that are perhaps easier to remember (and teach) than Strassen's original presentation, each of them either involves some ad hoc constructions or some un-illuminating calculations, which are often left to the reader. We feel that they do not really offer \emph{conceptual} explanations for the fact that the rank of $\T_2$ is at most 7.

Clausen~\cite{clausen} (see~\cite[pp.~11--12]{BCS} for a more widely available explanation in English) showed how one can use group orbits to show that the rank of $\T_2$ is at most 7. In fact, Clausen's beautiful construction was one of the starting points of our investigation. However, that construction relies on a seemingly magical property of a certain $4 \times 4$ multiplication table. 
More recently, Ikenmeyer and Lysikov \cite{ikenmeyer-lysikov} gave a beautiful explanation of Clausen's construction, but ultimately their proof for Strassen's algorithm still relies on the same magical property of the same $4 \times 4$ multiplication table, and it is not immediately obvious how to generalize to all $n$. 
In contrast, our result easily generalizes to all $n$, and more generally to orbits of any irreducible representation of any finite group.

\subsection{Related work}
This paper is a simplified and self-contained version of Section 5 of~\cite{GM}, in which we explored highly symmetric algorithms for multiplying matrices. Recently, there have been several papers analyzing the geometry and symmetries of algebraic algorithms for small matrices~\cite{burichenkoStrassen, burichenko, landsbergRyder, landsbergMichalekLB1,CILO}. In~\cite{GM}, we tried to take this line of research one step further by using symmetries to discover \emph{new} algorithms for multiplying matrices of small size. While those algorithms did not improve the state-of-the-art bounds on the matrix multiplication exponent, they suggested that we can use group symmetries and group orbits to find new algorithms for matrix multiplication.  In addition to their potential value for future endeavors, we believe that these highly symmetric matrix multiplication algorithms are beautiful in their own right, and deserve to be shared simply for their beauty.  

Although the method of construction suggested in~\cite{GM}, and independently in~\cite{CILO}, is more general than this, the constructions we ended up finding in~\cite{GM} were in fact all instances of a single design-based construction yielding $n^3 - n + 1$ multiplications for $n \times n$ matrix multiplication. The proof that this construction works is the simplest and most transparent proof of Strassen's algorithm that we are aware of.  

One may also reasonably wonder whether there is any relationship between our group-based construction and the family of group-based constructions suggested by Cohn and Umans~\cite{CU03}, including the constructions given in~\cite{CKSU05} and generalizations in~\cite{CU13}. While there may be a common generalization that captures both methods, at the moment we don't know of any direct relationship between the two. Indeed, one cannot use the group-theoretic approach of~\cite{CU03} to explain Strassen's result, even though the constructions of~\cite{CKSU05} get a better exponent: The only way to use their approach for the $2 \times 2$ case is to embed $\T_2$ into the cyclic group $C_7$, but Cohn and Umans showed that one could not beat $n^3$ using only abelian groups.  (Some of their more complicated constructions can beat $n^3$ in abelian groups, but those involve embedding multiple copies of $\T$ into the same group simultaneously, whereas here we are explicitly talking about embedding a single copy of $\T_2$.)

\section{Complexity, symmetry, and designs}

For general background on algebraic complexity, we refer the reader to the book~\cite{BCS}. Bl\"{a}ser's survey article~\cite{blaserSurvey}, in addition to excellent coverage around matrix multiplication, has a nice tutorial on tensors and their basic properties. 

In the algebraic setting, since matrix multiplication is a bilinear map, it is known that it can be reformulated as a tensor, and that the algebraic complexity of matrix multiplication is within a factor of 2 of the rank of this tensor. The matrix multiplication tensor for $n \times n$ matrices is 
\begin{equation}
\label{eq:twisted}
\T^{abc}_{def} = \delta^a_e \delta^b_f \delta^c_d \, , 
\end{equation}
where the indices range from $1$ to $n$, and where $\delta$ is the Kronecker delta, $\delta^a_b = 1$ if $a=b$ and $0$ if $a \ne b$.  This is also defined by the inner product
\[
\inner{\T}{A \otimes B \otimes C} = \tr ABC \, .
\]

Given vector spaces $V_1, \dotsc, V_k$, a vector $v \in V_1 \otimes \dotsb \otimes V_k$ is said to have tensor rank one if it is a separable tensor, that is, of the form $v_1 \otimes v_2 \otimes \dotsb \otimes v_k$ for some $v_i \in V_i$. The \emph{tensor rank} of $v$ is the smallest number of rank-one tensors whose sum is $v$. In the case of $\T_n$, we have $k = 3$ and $V_1 \cong V_2 \cong V_3 \cong \C^{n \times n}$. 

The matrix multiplication tensor $\T$ is characterized by its symmetries (e.g.,~\cite{burgisserIkenmeyer}).  That is, up to a constant, it is the unique operator fixed under the following action of $\GL(n)^3$: given $X, Y, Z \in \GL(n)$, we have
\begin{equation}
\label{eq:action}
\T = (X \otimes Y \otimes Z) \T (Z^{-1} \otimes X^{-1} \otimes Y^{-1}) \, , 
\end{equation}
where, if the notation isn't already clear, it will become so in the next equation.  To see that $\T$ has this symmetry, note that 
\[
\tr ABC = \tr (Z^{-1} A X)(X^{-1} B Y)(Y^{-1} C Z) \, .
\]
The fact that $\T$ is the only such operator up to a constant comes from a simple representation-theoretic argument, which generalizes the fact that the only matrices which are invariant under conjugation are scalar multiples of the identity.

This suggests that a good way to search for matrix multiplication algorithms is to start with sums of separable tensors where the sum has some symmetry built in from the beginning.  As we will see, one useful kind of symmetry is the following.  We say that a set of $n$-dimensional vectors $S$ is a \emph{unitary 2-design} if it has the following two properties:
\begin{equation}
\label{eq:design}
\sum_{v \in S} v = 0 
\quad \text{and} \quad
\frac{1}{|S|} \sum_{v \in S} \ket{v} \bra{v} = \frac{1}{n} \,\id \, ,
\end{equation}
where $\id$ denotes the identity matrix.  Here we use the Dirac notation $\ket{u}\bra{v}$ for the outer product of $u$ and $v$, i.e., the matrix whose $i,j$ entry is $u_i v_j^*$ where $*$ denotes the complex conjugate.  

%

The following theorem shows how 2-designs can be used to construct matrix multiplication algorithms.
\begin{theorem} 
\label{thm:main}
Let $S \subset \C^n$ be a unitary 2-design, and let $s = |S|$.   Then the tensor rank of $\T_n$ is at most $s(s-1)(s-2)+1$.
\end{theorem}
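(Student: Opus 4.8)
The plan is to exhibit $\T_n$ explicitly as a sum of $s(s-1)(s-2)+1$ separable tensors, starting from the trivial $n^3$-term decomposition $\T_n=\sum_{a,b,c=1}^{n}\ket{e_a}\bra{e_b}\otimes\ket{e_b}\bra{e_c}\otimes\ket{e_c}\bra{e_a}$ (immediate from $\inner{\T}{A\otimes B\otimes C}=\tr ABC$) and replacing the orthonormal basis by the frame $S$; this is natural in view of the twisted symmetry~\eqref{eq:action}, which with $X=Y=Z$ acts on rank-one tensors by conjugation. The first step is to verify that the ``cyclic'' building blocks $R_{uvw}:=\ket{u}\bra{v}\otimes\ket{v}\bra{w}\otimes\ket{w}\bra{u}$, summed over all $u,v,w\in S$, reconstruct matrix multiplication up to scale. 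Pairing $\sum_{u,v,w\in S}R_{uvw}$ against an arbitrary $A\otimes B\otimes C$ gives $\sum_{u,v,w\in S}\bra{u}A\ket{v}\bra{v}B\ket{w}\bra{w}C\ket{u}$; carrying out the sums over $v$ and over $w$ each inserts a factor $\tfrac{s}{n}\id$ by the tight-frame half of~\eqref{eq:design}, and the remaining sum over $u$ contributes one more, leaving $\tfrac{s^3}{n^3}\tr(ABC)$. As $\T_n$ is determined by $\inner{\T}{A\otimes B\otimes C}=\tr ABC$, this gives $\sum_{u,v,w\in S}R_{uvw}=\tfrac{s^3}{n^3}\,\T_n$: an algorithm with $s^3$ multiplications that uses only the tight-frame property, not the barycenter one.

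The barycenter condition $\sum_{v\in S}v=0$ is exactly what trims $s^3$ down to $s(s-1)(s-2)+1$. Split the $s^3$ index triples into the $s(s-1)(s-2)$ with pairwise-distinct entries and the ``degenerate'' remainder where some index repeats. The second step is to show that the degenerate part, after regrouping via the relations $\big(\sum_{v\in S}\ket{v}\big)\bra{w}=0$ and $\ket{v}\big(\sum_{w\in S}\bra{w}\big)=0$ that follow from $\sum_{v\in S}v=0$, collapses to a single additional rank-one tensor --- one expects essentially $\id\otimes\id\otimes\id$, up to readjusting the coefficients of the surviving $R_{uvw}$, corresponding to the one ``symmetric'' product $(A_{11}+A_{22})(B_{11}+B_{22})$ among Strassen's seven. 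Together with the $s(s-1)(s-2)$ distinct-triple tensors this gives the claimed bound.

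I expect this second step to be the crux. The degenerate terms do not cancel term-by-term, and naively collapsing a family of triples with a repeated index reintroduces second-moment quantities such as $\sum_{v\in S}\bra{v}A\ket{v}\,\ket{v}\bra{v}$, which neither axiom in~\eqref{eq:design} pins down; so the cancellations must be arranged with care, most plausibly by letting the coefficients on the distinct-triple tensors be non-uniform functions of the pairwise inner products $\inner{u}{v}$, tuned so that every uncontrolled quantity telescopes away through $\sum_{v\in S}v=0$ and exactly one new rank-one tensor survives. As a guide and a sanity check I would first run the computation for the three equiangular unit vectors in $\R^2\subset\C^2$ --- the smallest 2-design, with $s=3$, hence seven products --- and confirm that the result is Strassen's algorithm up to a linear change of basis on the three matrix slots.
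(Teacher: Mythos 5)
Your first step is exactly the paper's starting point, and it is correct: the tight-frame half of~\eqref{eq:design} alone gives $\sum_{u,v,w\in S}\ket{u}\bra{v}\otimes\ket{v}\bra{w}\otimes\ket{w}\bra{u}=\tfrac{s^3}{n^3}\,\T_n$, and the same computation without the cyclic twist gives $\sum_{u,v,w\in S}\ket{u}\bra{u}\otimes\ket{v}\bra{v}\otimes\ket{w}\bra{w}=\tfrac{s^3}{n^3}\,\id^{\otimes 3}$. You also correctly diagnose where the naive second step fails: partitioning the $s^3$ triples into distinct and degenerate ones and trying to collapse the degenerate part reintroduces second- and third-moment quantities such as $\sum_v\ket{v}\inner{v}{A v}\bra{v}$ that neither axiom in~\eqref{eq:design} controls.

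The gap is in the fix you then sketch. You propose to keep the $s(s-1)(s-2)$ tensors $R_{ijk}=\ket{w_i}\bra{w_j}\otimes\ket{w_j}\bra{w_k}\otimes\ket{w_k}\bra{w_i}$ and reweight them by functions of the inner products $\inner{w_i}{w_j}$, hoping the degenerate contribution telescopes into a single extra rank-one term. That is not what works, and it is not what the paper does. The missing idea is to change the \emph{summands}, not the coefficients: replace each $R_{ijk}$ by the shifted rank-one tensor
\[
\ket{w_i}\bra{w_j-w_i}\otimes\ket{w_j}\bra{w_k-w_j}\otimes\ket{w_k}\bra{w_i-w_k}\,,
\]
which is genuinely a different decomposable tensor (not a scalar multiple of $R_{ijk}$) and has two decisive features. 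First, it vanishes identically whenever two of $i,j,k$ coincide, so the sum over distinct triples equals the sum over \emph{all} $s^3$ triples --- no separate treatment of degenerate triples is ever needed. Second, expanding the all-triples sum into its $2^3=8$ monomials, the ``all upper'' monomial sums to $\tfrac{s^3}{n^3}\T_n$, the ``all lower'' one sums to $-\tfrac{s^3}{n^3}\id^{\otimes 3}$, and in each of the six mixed monomials exactly one of $i,j,k$ appears only once, so summing over that index annihilates the term by $\sum_{v\in S}v=0$. The uncontrolled moments you worried about simply never appear. Rearranging yields $\T_n=\id^{\otimes 3}+\tfrac{n^3}{s^3}\sum_{i,j,k\text{ distinct}}\ket{w_i}\bra{w_j-w_i}\otimes\ket{w_j}\bra{w_k-w_j}\otimes\ket{w_k}\bra{w_i-w_k}$, i.e.\ $s(s-1)(s-2)+1$ rank-one terms. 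Your intuition that the extra term is $\id^{\otimes 3}$ (evaluating to $\tr A\,\tr B\,\tr C$) and that the $s=3$ equilateral design recovers Strassen is on target, but without this shift trick the argument does not close.
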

\noindent 

\begin{proof}
Let $S=\{w_1, \dotsc, w_s \}$.  
We will show that the following is a decomposition of $\T_n$:
\begin{equation}
\label{eq:lattice}
\T_n = \id^{\otimes 3} 
+ \frac{n^3}{s^3} \sum_{\mbox{$i,j,k$ distinct}} 
\ket{w_i}\bra{w_j-w_i} \otimes
\ket{w_j}\bra{w_k-w_j} \otimes
\ket{w_k}\bra{w_i-w_k} \, .
\end{equation}
Since there are $s(s-1)(s-2)$ distinct ordered triples $i,j,k \in \{1,\ldots,s\}$, this decomposition has $s(s-1)(s-2)+1$ terms.

To prove~\eqref{eq:lattice}, we use the fact~\eqref{eq:twisted} that $\T$ can be written as a kind of twisted tensor product of identity matrices or Kronecker deltas.  By the second property in the definition~\eqref{eq:design} of a 2-design, we have
\begin{equation}
\label{eq:twisted-sum}
\T_n = \frac{n^3}{s^3} \sum_{i,j,k}
\ket{w_i}\bra{w_j} \otimes
\ket{w_j}\bra{w_k} \otimes
\ket{w_k}\bra{w_i} 
\end{equation}
At the same time, the un-twisted version of this identity is
\begin{equation}
\label{eq:id-sum}
\id^{\otimes 3} = \frac{n^3}{s^3} \sum_{i,j,k}
\ket{w_i} \bra{w_i} \otimes
\ket{w_j}\bra{w_j} \otimes
\ket{w_k}\bra{w_k} \, . 
\end{equation}
Now we expand~\eqref{eq:lattice}.  We can sum over all $s^3$ triples $i,j,k$, since if any of these are equal the summand is zero.  
Then
\begin{align} 
\sum_{i,j,k} &
\ket{w_i}\bra{w_j-w_i} \otimes
\ket{w_j}\bra{w_k-w_j} \otimes
\ket{w_k}\bra{w_i-w_k} \nonumber \\
&= 
\sum_{i,j,k}
\ket{w_i}\bra{w_j} \otimes
\ket{w_j}\bra{w_k} \otimes
\ket{w_k}\bra{w_i} \nonumber \\
&- \sum_{i,j,k} \Big[ 
\ket{w_i}\bra{w_i} \otimes
\ket{w_j}\bra{w_k} \otimes
\ket{w_k}\bra{w_i} 
+ \ket{w_i}\bra{w_j} \otimes
\ket{w_j}\bra{w_j} \otimes
\ket{w_k}\bra{w_i} 
+ \ket{w_i}\bra{w_j} \otimes
\ket{w_j}\bra{w_k} \otimes
\ket{w_k}\bra{w_k} \Big] \label{eq:line3} \\
&+ \sum_{i,j,k} \Big[ 
\ket{w_i}\bra{w_j} \otimes
\ket{w_j}\bra{w_j} \otimes
\ket{w_k}\bra{w_k} 
+ \ket{w_i}\bra{w_i} \otimes
\ket{w_j}\bra{w_k} \otimes
\ket{w_k}\bra{w_k} 
+ \ket{w_i}\bra{w_i} \otimes
\ket{w_j}\bra{w_j} \otimes
\ket{w_k}\bra{w_i} \Big] \label{eq:line4} \\
&- \sum_{i,j,k}
\ket{w_i}\bra{w_i} \otimes
\ket{w_j}\bra{w_j} \otimes
\ket{w_k}\bra{w_k} 
\nonumber 
\end{align}
The mixed terms in lines~\eqref{eq:line3} and~\eqref{eq:line4} disappear because each product has an index that appears only once, and the first property in the definition~\eqref{eq:design} of a 2-design implies that summing over that index gives zero.  Combining this with~\eqref{eq:twisted-sum} and~\eqref{eq:id-sum} leaves us with 
\[
\frac{n^3}{s^3} \sum_{i,j,k} 
\ket{w_i}\bra{w_j-w_i} \otimes
\ket{w_j}\bra{w_k-w_j} \otimes
\ket{w_k}\bra{w_i-w_k} 
= \T_n - \id^{\otimes 3} \, , 
\]
which completes the proof.
\end{proof}

Now, in $n=2$ dimensions the three corners of an equilateral triangle form a 2-design:
\begin{equation}
\label{eq:equil}
S = \big\{ (1,0) \, , \; (-1/2, \sqrt{3}/2) \, , \; (-1/2, -\sqrt{3}/2) \big\} \, . 
\end{equation}
The outer products of these vectors with themselves are 
\[
\begin{pmatrix} 1 & 0 \\ 0 & 0 \end{pmatrix} \, , \;
\begin{pmatrix} 1/4 & -\sqrt{3}/4 \\ -\sqrt{3}/4 & 3/4 \end{pmatrix} \, , \;
\begin{pmatrix} 1/4 & \sqrt{3}/4 \\ \sqrt{3}/4 & 3/4 \end{pmatrix} \, , \;
\]
and the average of these is $\id / 2$.  This design~\eqref{eq:equil} has size $s=3$, in which case Theorem~\ref{thm:main} shows that $\T_2$ has a tensor rank of at most $7$.  

The reader might object that we haven't really re-derived Strassen's algorithm since our algorithm doesn't seem to yield the same equations as Strassen's. However, de Groote~\cite{deGroote} has shown that all 7-term decompositions of $\T_2$ are equivalent up to a change of basis, i.e., an instance of the $\GL(n)^3$ action~\eqref{eq:action}.  Thus the algorithm based on the triangular design~\eqref{eq:equil} is in fact isomorphic to Strassen's algorithm, and in any case, it gives a conceptual explanation for the fact that $\T_2$ has tensor rank $7$.  (For the reader wondering about algorithms for matrix multiplication over rings other than $\C$, see Section~\ref{sec:future}.)

\section{Generalizations to larger $n$ from group orbits}

The triangular design~\eqref{eq:equil} has a pleasing symmetry.   In this section we show how to find similar designs in higher dimensions as the orbits of group actions.  We assume basic familiarity with finite groups, for which we refer the reader to any standard textbook such as~\cite{artin}.  We need a few facts from representation theory, which we spell out for completeness in the hopes of making the paper self-contained for a larger audience.  Everything we do will be over the complex numbers $\C$, but generalizes to other fields, with some modifications.

A \emph{representation} of a finite group $G$ is a vector space $V$ together with a group homomorphism $\rho\colon G \to \GL(V)$, where $\GL(V)$ denotes the general linear group of $V$, namely, the group of all invertible linear transformations from $V$ to itself. By choosing a basis for $V$, we identify $V \cong \C^{\dim V}$, and each $\rho(g)$ becomes a $\dim V \times \dim V$ matrix such that $\rho(g)\rho(h) = \rho(gh)$ for all $g,h \in G$.  

When the homomorphism $\rho$ is understood from context, we refer to $V$ as a representation of $G$. In this case, for $g \in G$ and $v \in V$ we write $gv$ instead of $\rho(g)(v)$.  The trivial representation is the identity map on $V$, where $gv=v$ for all $g \in G$.  

A representation $V$ of $G$ is called \emph{unitary} if each $\rho(g)$ is a unitary matrix. Any representation of a finite group over $\C$ is equivalent, up to change of basis, to a unitary representation.  
In this basis we define the inner product $\inner{u}{v} = \sum_i u^*_i v_i$ and the norm $|v|^2 = \inner{v}{v} = \sum_i |v_i|^2$ as usual.  Note that $\inner{u}{v} = \inner{gu}{gv}$ and $|gv|^2 = |v|^2$.  

Given two representations $\rho\colon G \to \GL(V)$ and $\rho'\colon G \to \GL(V')$ of the same group $G$, their \emph{direct sum} is a representation $(\rho \oplus \rho')\colon G \to \GL(V \oplus V')$ given by the matrices
\[
(\rho \oplus \rho')(g) = \left(\begin{array}{cc}
\rho(g) & 0 \\
0 & \rho'(g)
\end{array}\right).
\]
A representation $\rho\colon G \to \GL(V)$ is \emph{irreducible} if the only subspaces $W \subseteq V$ that are sent to themselves by every $g \in G$, i.e., such that $\rho(g)(w) \in W$ for all $g \in G$ and $w \in W$, are the trivial subspaces $W=0$ or $W=V$.  In fields of characteristic zero such as $\C$ or $\R$, a representation is a direct sum if and only if it is not irreducible.  

Given two representations $V,W$ of $G$, a \emph{homomorphism} of representations is a linear map $\varphi \colon V \to W$ that commutes with the action of $G$, in the sense that $\varphi(g v) = g \varphi(v)$ for all $g \in G, v \in V$.  

\begin{lemma*}[Schur's Lemma]
If $V$ and $W$ are two irreducible representations of a group $G$, then every nonzero homomorphism $V \to W$ is invertible. In particular, over an algebraically closed field, every homomorphism $V \to V$ is a scalar multiple of the identity $\id_V$.
\end{lemma*}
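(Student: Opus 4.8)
The plan is to run the standard two-step argument: first analyze an arbitrary nonzero homomorphism between two irreducibles, then specialize to the endomorphism case over an algebraically closed field.

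First I would observe that for any homomorphism of representations $\varphi\colon V \to W$, both the kernel $\ker\varphi \subseteq V$ and the image $\varphi(V) \subseteq W$ are subrepresentations. For the kernel: if $v \in \ker\varphi$ then $\varphi(gv) = g\varphi(v) = 0$ for every $g \in G$, so $gv \in \ker\varphi$; the image is similar, since $g\varphi(v) = \varphi(gv) \in \varphi(V)$. Now apply irreducibility. Since $V$ is irreducible, $\ker\varphi$ is either $0$ or all of $V$, and since $\varphi \neq 0$ it must be $0$, so $\varphi$ is injective. Since $W$ is irreducible, $\varphi(V)$ is either $0$ or all of $W$, and since $\varphi \neq 0$ it must be all of $W$, so $\varphi$ is surjective. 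Hence $\varphi$ is a bijective linear map, i.e., invertible, and its inverse is automatically a homomorphism of representations.

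For the second statement, take $V = W$ irreducible over an algebraically closed field and let $\varphi\colon V \to V$ be a homomorphism of representations. Because the field is algebraically closed and $V$ is finite-dimensional, $\varphi$ has an eigenvalue $\lambda$. Then $\varphi - \lambda\,\id_V$ is again a homomorphism of representations --- here one uses that $\id_V$ commutes with the $G$-action, so scalar multiples of it do too, and that homomorphisms of representations are closed under linear combinations --- and it is \emph{not} invertible, since its kernel contains a $\lambda$-eigenvector and is therefore nonzero. By the first part, a non-invertible homomorphism $V \to V$ between irreducibles must be the zero map, so $\varphi - \lambda\,\id_V = 0$, that is, $\varphi = \lambda\,\id_V$.

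The argument is essentially routine; the one place needing care is the role of algebraic closure in the second part, which is exactly what guarantees the eigenvalue $\lambda$ exists --- over $\R$, for instance, a rotation representation of a cyclic group admits endomorphisms that are not scalar multiples of the identity. I would also make explicit the small structural point that homomorphisms of representations $V \to V$ form a linear subspace of all linear maps, so that $\varphi - \lambda\,\id_V$ genuinely falls under the first part.
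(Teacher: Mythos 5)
Your proof is correct, and it is the standard textbook argument for Schur's Lemma: kernel and image are subrepresentations, irreducibility forces each to be trivial or everything, and over an algebraically closed field one subtracts off an eigenvalue and applies the first part. You are also right to flag the role of algebraic closure and the fact that $G$-equivariant maps form a linear subspace of $\mathrm{Hom}(V,V)$, so that $\varphi - \lambda\,\id_V$ is again a homomorphism of representations. One small point worth making explicit (which you assert but do not justify) is that the inverse of a bijective homomorphism of representations is again a homomorphism: from $\varphi(gv) = g\varphi(v)$ one gets $\varphi^{-1}(gw) = g\varphi^{-1}(w)$ by substituting $v = \varphi^{-1}(w)$. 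Also note that finite-dimensionality of $V$, which you invoke for the eigenvalue, is automatic in this paper's setting (finite groups, finite-dimensional representations) but should be stated as a hypothesis if the lemma is read in isolation.

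For comparison with the source: the paper does not actually prove Schur's Lemma. It is stated in a starred, unnumbered environment as a classical fact and then used as a black box; the only proof supplied in that section is of the corollary that a group orbit in a nontrivial irreducible unitary representation is a 2-design. So there is no in-paper argument to measure yours against, but your proof is the canonical one and would be the natural thing to cite or include if the paper wished to be fully self-contained.
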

Schur's Lemma implies that the orbit of any unit-length vector in an irreducible representation is a 2-design in the sense defined above.  We include a proof of this classical fact for completeness.

\begin{corollary} 
\label{cor:schur1}
If $V$ is a nontrivial irreducible representation of $G$, and $v \in V$ with $|v|^2=1$, then the orbit $\{ gv \mid g \in G \}$ is a 2-design.
\end{corollary}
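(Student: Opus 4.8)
The plan is to run the classical averaging-plus-Schur argument. First I would note that since $|v|^2 = 1$ we have $v \neq 0$, so the orbit is nonempty, and I would work with the \emph{multiset} $\{gv : g \in G\}$ of size $|G|$ rather than with the orbit as a set. Because every element of the orbit has a stabilizer of the same size, each distinct orbit vector occurs with the same multiplicity in this multiset; hence the two normalized averages appearing in \eqref{eq:design} take the same value whether computed over the set $S$ or over the $|G|$-element multiset, and it suffices to verify the multiset version. This bookkeeping step is the only genuinely fiddly point.

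For the first property, I would consider $u := \sum_{g \in G} gv$. Reindexing the sum by $g \mapsto hg$ gives $hu = u$ for every $h \in G$, so $u$ lies in the subspace $V^G$ of $G$-invariant vectors. That subspace is a subrepresentation of $V$, so by irreducibility it is either $0$ or all of $V$; the second option would make $V$ a trivial representation, contradicting the hypothesis. Hence $V^G = 0$ and $u = 0$, which (after dividing by the common multiplicity) gives $\sum_{w \in S} w = 0$. Note that nontriviality of $V$ is used precisely here.

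For the second property, I would set $M := \sum_{g \in G} \ket{gv}\bra{gv}$. Using that the representation is unitary, $\ket{gv}\bra{gv} = \rho(g)\ket{v}\bra{v}\rho(g)^{-1}$, and the same reindexing $g \mapsto hg$ shows $\rho(h) M = M \rho(h)$ for all $h \in G$, i.e., $M$ is a homomorphism of representations $V \to V$. By Schur's Lemma over $\C$, $M = \lambda\, \id$ for some scalar $\lambda$. Taking traces and using unitarity once more — so that $\tr \ket{gv}\bra{gv} = |gv|^2 = |v|^2 = 1$ — yields $\lambda n = \tr M = |G|$, hence $\lambda = |G|/n$ and $\frac{1}{|G|} M = \frac{1}{n}\,\id$. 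Dividing $M$ by the common multiplicity as before turns this into $\frac{1}{|S|}\sum_{w \in S}\ket{w}\bra{w} = \frac{1}{n}\,\id$, which is the second condition in \eqref{eq:design}.

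I do not expect a serious obstacle: this is the standard proof that a group orbit of a unit vector in an irreducible representation averages to a multiple of the identity. The two things to be careful about are the conversion between the orbit-as-a-set and the sum over all of $G$, and invoking unitarity in both places it is needed — once to replace $\rho(g)^\dagger$ by $\rho(g^{-1})$ so the reindexing $g \mapsto hg$ goes through, and once to see that the trace of each rank-one summand is $|v|^2 = 1$. Irreducibility (and algebraic closedness of $\C$) is what powers the second property, while nontriviality is only needed for the first.
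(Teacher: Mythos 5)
Your proof is correct and follows essentially the same argument as the paper: an averaging/reindexing step to show $\sum_g gv$ spans a trivial subrepresentation and hence vanishes, followed by Schur's Lemma plus a trace computation to pin down the scalar in $\sum_g \ket{gv}\bra{gv} = \lambda\,\id$. The one point you handle more carefully than the paper is the set-versus-multiset bookkeeping (dividing by the common stabilizer size $|\mathrm{Stab}(v)|$ to pass from the sum over $G$ to the sum over the orbit $S$); the paper sums over $g \in G$ throughout and leaves that reduction implicit, but this does not change the substance of the argument.
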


\begin{proof}
First, the vector $\sum_{g \in G} gv$ always spans a 1-dimensional trivial sub-representation $W \subseteq V$, since $h \cdot (\sum gv) = \sum_{g \in G} (hg) v = \sum_{g \in G} gv$. If $V$ is irreducible, then either $W=0$ or $W=V$, but if $V$ is nontrivial we cannot have $W=V$.  Thus $\sum_g gv = 0$.  

Second, let $\varphi = \sum_{g \in G} \ket{gv}\bra{gv}$.  Then for any $h \in G$ we have
\[ 
h \varphi(w) = h \left( \sum_g \ket{gv}\bra{gv} \right) w = \sum_g \ket{hgv}\inner{gv}{w} = \sum_g \ket{gv} \inner{h^{-1} gv}{w} = \sum_g \ket{gv} \inner{gv}{hw} = \varphi(hw) \, .
\] 
Thus $\varphi$ is a homomorphism of representations, and by Schur's Lemma $\varphi$ is a multiple of $\id_V$.  We obtain the scaling factor by taking traces:
\[
\frac{1}{|G|} \tr \varphi = \frac{1}{|G|} \sum_g |gv|^2 = |v|^2 = 1 = \frac{1}{\dim V} \,\tr \id_V \, , 
\]
so $\varphi / |G| = \id_V / \dim V$.
\end{proof}

We can now combine Corollary~\ref{cor:schur1} with Theorem~\ref{thm:main} to produce matrix multiplication algorithms in all dimensions, by considering families of finite groups and their irreducible representations.  In particular, we have the following.

\begin{corollary} 
\label{cor:main}
For every $n \geq 1$, the tensor rank of $\T_n$ is at most $n^3 - n + 1$. 
\end{corollary}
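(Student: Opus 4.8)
The plan is to realize, for each $n \ge 1$, a unitary 2-design of size exactly $n+1$ inside $\C^n$ as a group orbit, and then invoke Theorem~\ref{thm:main}. Indeed, if $S$ is a 2-design with $s = |S| = n+1$, then Theorem~\ref{thm:main} bounds the tensor rank of $\T_n$ by $s(s-1)(s-2)+1 = (n+1)\,n\,(n-1)+1 = n^3 - n + 1$, which is exactly the claimed bound. So everything comes down to producing such a design, and Corollary~\ref{cor:schur1} tells us it suffices to find an $n$-dimensional nontrivial irreducible representation of some finite group together with a unit vector whose orbit has precisely $n+1$ elements.

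The construction I would use is the symmetric group $G = S_{n+1}$ acting by permuting coordinates on $\C^{n+1}$. This permutation representation decomposes as the trivial line spanned by $\mathbf 1 = (1, \dots, 1)$ together with the \emph{standard representation} $V = \{x \in \C^{n+1} : \sum_i x_i = 0\}$, which has dimension $n$, is nontrivial, and is classically known to be irreducible over $\C$; for $n = 2$ this recovers the equilateral triangle of~\eqref{eq:equil} as the orbit under $S_3$, and for $n = 1$ it is the sign representation of $S_2$. Inside $V$, take the vectors $v_i = e_i - \frac{1}{n+1}\mathbf 1$ for $i = 1, \dots, n+1$, where $e_i$ is the $i$th standard basis vector of $\C^{n+1}$. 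These lie in $V$, they sum to zero because $\sum_i e_i = \mathbf 1$, and $S_{n+1}$ permutes them transitively with the stabilizer of $v_i$ equal to the copy of $S_n$ fixing the $i$th coordinate; hence $\{v_1, \dots, v_{n+1}\}$ is a single orbit of size exactly $n+1$. A one-line computation gives $|v_i|^2 = n/(n+1)$, so $\sqrt{(n+1)/n}\,v_i$ is a unit vector whose orbit is this same set of $n+1$ points. By Corollary~\ref{cor:schur1} this orbit is a unitary 2-design with $|S| = n+1$, and the count in the first paragraph finishes the proof. (For $n = 1$ the bound reads $n^3 - n + 1 = 1$, consistent with the obvious fact that scalar multiplication takes one multiplication.)

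The only input that is not purely formal is the irreducibility of the standard representation of $S_{n+1}$; this is classical, provable for instance by checking that its character has squared norm $1$, or by a direct argument that any $S_{n+1}$-invariant subspace of $V$ is $0$ or all of $V$. I do not expect any genuine obstacle here: the remaining claims --- that the $v_i$ sum to zero, that they form one orbit of size $n+1$, and the norm computation --- are all routine.
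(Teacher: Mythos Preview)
Your proposal is correct and is essentially the paper's own proof: both use the standard representation of $S_{n+1}$ and take the orbit of (a normalization of) $e_1 - \tfrac{1}{n+1}\mathbf{1}$, which is exactly the paper's vector $w = \tfrac{1}{\sqrt{n(n+1)}}(n,-1,\dots,-1)$, to get a 2-design of size $n+1$ and then invoke Theorem~\ref{thm:main}. You supply a bit more detail than the paper does (the stabilizer argument for the orbit size and the remark on irreducibility), but the approach is the same.
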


\begin{proof}
Let $G = S_{n+1}$ be the symmetric group acting by permuting the coordinates on $V = \R^{n+1}$.  As a representation of $G$, this splits into a direct sum of the trivial representation (spanned by the all-ones vector) and the so-called ``standard representation'' $V$ of $S_{n+1}$, of dimension $n$ (consisting of the vectors whose coordinates sum to zero). Let $w \in V \subset \R^{n+1}$ be the unit vector 
\[
w = \frac{1}{\sqrt{n(n+1)}} \,( n, -1, \ldots, -1) 
\]
The orbit of $w$ has size $s=n+1$, and consists of unit vectors pointing to the corners of a simplex.  Now apply Theorem~\ref{thm:main}.
\end{proof}

\section{Future directions} \label{sec:future}

\paragraph{Highly symmetric algorithms?} Since any design must span $\C^n$, it has size $s \ge n$.  Indeed, since its elements sum to zero, they are linearly dependent, so $s \ge n+1$.  Thus the simplex designs of Corollary~\ref{cor:main} are optimal in this context, and applying Theorem~\ref{thm:main} to larger $n$ cannot directly improve the matrix multiplication exponent.  This leaves open the question of whether there are other families of highly symmetric algorithms.  See~\cite{GM, CILO} for work in this direction, as well as~\cite{burichenkoStrassen, burichenko, landsbergRyder, landsbergMichalekLB1, ikenmeyer-lysikov}.

\paragraph{Using $t$-designs for $t > 2$?} The key fact we used was that any orbit in an irreducible representation of a finite group is a unitary 2-design.  Similarly, a  \emph{unitary $t$-design} in a vector space $V$ is a set of vectors $S \subseteq V$ such that, for every polynomial $f$ on $V$ of degree at most $t$, the average of $f$ over $S$ is the same as the average of $f$ over the unit sphere in $V$.  (Over the reals, these are traditionally called ``spherical $t$-designs,'' but we are working in complex vector spaces.)  Another open question is then
\begin{question}
Can $t$-designs for $t > 2$ help us construct efficient matrix multiplication algorithms?
\end{question}
\noindent For example, one might hope for a similar construction to Theorem~\ref{thm:main}, in which one could leverage the $t$-design property to get even more terms to cancel. 

\paragraph{Working over arbitrary rings?} 

One fact which is obvious from Strassen's original construction, but not from ours, is that the rank of $\T_2$ is at most 7 over any ring. Strassen's construction works in all rings since it only uses coefficients in $\{0,\pm 1\}$; ours uses coefficients in $\Z[\sqrt{3}, 1/2, 1/3]$, so it works in any ring where the elements $\sqrt{3}, 1/2, 1/3$ exist. Note that $1/2$ and $1/3$ exist in any ring $R$ of characteristic $m$ coprime to $6$, since any such ring contains $\Z/m\Z$ as a subring, in which $2$ and $3$ are units. If $\sqrt{3}$ doesn't exist in $R$, we can formally adjoin it by considering $R' = R[x]/(x^2-3)$; by a standard trick (e.g.~\cite[Section~15.3]{BCS}), this implies the same exponent over $R$ itself (although it may not actually yield an algorithm for $\T_2$ over $R$).

\begin{question}
Is there a similarly transparent and conceptual proof of Strassen's result that works over arbitrary rings?
\end{question}

\section*{Acknowledgments}

Parts of this project were inspired in 2015 by discussions with Jonah Blasiak, Thomas Church, Henry Cohn, and Chris Umans, via a collaboration funded by the AIM SQuaRE program, with an additional visit hosted by the Santa Fe Institute. J.A.G. was funded by an Omidyar Fellowship from the Santa Fe Institute during this work and by NSF grant DMS-1620484, and C.M. was funded partly by the John Templeton Foundation. C.M. also thanks \'{E}cole Normale Sup\'{e}rieure for providing a visiting position during which some of this work was carried out.

\bibliographystyle{alphaurl}
\bibliography{strassen-fourier}

  \newcommand{\conference}[4]{{#1} '#3: #4
  #2}\newcommand{\FOCS}[2]{\conference{FOCS}{Annual {IEEE} {S}ymposium on
  {F}oundations of {C}omputer
  {S}cience}{#1}{#2}}\newcommand{\STOC}[2]{\conference{STOC}{Annual {ACM}
  {S}ymposium on {T}heory of
  {C}omputing}{#1}{#2}}\newcommand{\ITCS}[2]{\conference{ITCS}{Annual {ACM}
  {I}nnovations in {T}heoretical {C}omputer
  {S}cience}{#1}{#2}}\newcommand{\MFCS}[2]{\conference{MFCS}{{S}ymposium on
  {M}athematical {F}oundations of {C}omputer
  {S}cience}{#1}{#2}}\newcommand{\SODA}[2]{\conference{SODA}{{ACM}--{SIAM}
  {S}ymposium on {D}iscrete
  {A}lgorithms}{#1}{#2}}\newcommand{\ICALP}[2]{\conference{ICALP}{{I}nternational
  {C}olloquium on {A}utomata, {L}anguages and
  {P}rogramming}{#1}{#2}}\newcommand{\CCC}[2]{\conference{CCC}{{IEEE}
  {C}onference on {C}omputational
  {C}omplexity}{#1}{#2}}\newcommand{\STACS}[2]{\conference{STACS}{Annual
  {S}ymposium on {T}heoretical {A}spects of {C}omputer
  {S}cience}{#1}{#2}}\newcommand{\arXiv}[1]{\href{http://arxiv.org/abs/#1}{arXiv:#1}}
\begin{thebibliography}{CKSU05}

\bibitem[Ale97]{alekseyev}
V.~B. Alekseyev.
\newblock Maximal extensions with simple multiplication for the algebra of
  matrices of the second order.
\newblock {\em Disc. Math. Appl.}, 7:89--101, 1997.
\newblock \href {http://dx.doi.org/10.1515/dma.1997.7.1.89}
  {\path{doi:10.1515/dma.1997.7.1.89}}.

\bibitem[Art91]{artin}
Michael Artin.
\newblock {\em Algebra}.
\newblock Prentice Hall, Inc., Englewood Cliffs, NJ, 1991.

\bibitem[BCS97]{BCS}
Peter B{\"u}rgisser, Michael Clausen, and M.~Amin Shokrollahi.
\newblock {\em Algebraic complexity theory}, volume 315 of {\em Grundlehren der
  Mathematischen Wissenschaften [Fundamental Principles of Mathematical
  Sciences]}.
\newblock Springer-Verlag, Berlin, 1997.
\newblock With the collaboration of Thomas Lickteig.
\newblock \href {http://dx.doi.org/10.1007/978-3-662-03338-8}
  {\path{doi:10.1007/978-3-662-03338-8}}.

\bibitem[BI11]{burgisserIkenmeyer}
Peter B{\"u}rgisser and Christian Ikenmeyer.
\newblock Geometric complexity theory and tensor rank.
\newblock In {\em \STOC{11}{43rd}}, pages 509--518. ACM, New York, 2011.
\newblock Preprint of the full version available as \arXiv{1011.1350} [cs.CC].
\newblock \href {http://dx.doi.org/10.1145/1993636.1993704}
  {\path{doi:10.1145/1993636.1993704}}.

\bibitem[Bl{\"a}13]{blaserSurvey}
Markus Bl{\"a}ser.
\newblock {\em Fast Matrix Multiplication}.
\newblock Number~5 in Graduate Surveys. Theory of Computing Library, 2013.
\newblock \href {http://dx.doi.org/10.4086/toc.gs.2013.005}
  {\path{doi:10.4086/toc.gs.2013.005}}.

\bibitem[Bur14]{burichenkoStrassen}
Vladimir~P. Burichenko.
\newblock On symmetries of the {Strassen} algorithm.
\newblock \arXiv{1408.6273} [cs.CC], 2014.

\bibitem[Bur15]{burichenko}
Vladimir~P. Burichenko.
\newblock Symmetries of matrix multiplication algorithms. {I}.
\newblock \arXiv{1508.01110} [cs.CC], 2015.

\bibitem[Cha86]{chatelin}
Philippe Chatelin.
\newblock On transformations of algorithms to multiply {$2\times 2$} matrices.
\newblock {\em Inform. Process. Lett.}, 22(1):1--5, 1986.
\newblock \href {http://dx.doi.org/10.1016/0020-0190(86)90033-5}
  {\path{doi:10.1016/0020-0190(86)90033-5}}.

\bibitem[CILO16]{CILO}
Luca Chiantini, Christian Ikenmeyer, J.M. Landsberg, and Giorgio Ottaviani.
\newblock The geometry of rank decompositions of matrix multiplication {I}: {$2
  \times 2$} matrices.
\newblock \arXiv{1610.08364} [cs.CC], 2016.

\bibitem[CKSU05]{CKSU05}
Henry Cohn, Robert Kleinberg, Balazs Szegedy, and Christopher Umans.
\newblock Group-theoretic algorithms for matrix multiplication.
\newblock In {\em \FOCS{05}{46th}}, pages 379--388. IEEE Computer Society,
  2005.
\newblock Preprint available as \arXiv{math.GR/0511460}.
\newblock \href {http://dx.doi.org/10.1109/SFCS.2005.39}
  {\path{doi:10.1109/SFCS.2005.39}}.

\bibitem[Cla88]{clausen}
M.~Clausen.
\newblock {\em Beitr\"{a}ge zum {Entwurf} schneller
  {Spektraltransformationen}}.
\newblock Habilitation, Universit\"{a}t Karlsruhe, 1988.

\bibitem[CU03]{CU03}
Henry Cohn and Christopher Umans.
\newblock A group-theoretic approach to fast matrix multiplication.
\newblock In {\em \FOCS{03}{44th}}, pages 438--449. IEEE Computer Society,
  2003.
\newblock Preprint available as \arXiv{math.GR/0307321}.
\newblock \href {http://dx.doi.org/10.1109/SFCS.2003.1238217}
  {\path{doi:10.1109/SFCS.2003.1238217}}.

\bibitem[CU13]{CU13}
Henry Cohn and Christopher Umans.
\newblock Fast matrix multiplication using coherent configurations.
\newblock In {\em \SODA{13}{24th}}, pages 1074--1087, 2013.
\newblock Preprint available as \arXiv{1207.6528} [math.NA].
\newblock \href {http://dx.doi.org/10.1137/1.9781611973105.77}
  {\path{doi:10.1137/1.9781611973105.77}}.

\bibitem[dG78]{deGroote}
Hans~F. de~Groote.
\newblock On varieties of optimal algorithms for the computation of bilinear
  mappings. {II}. {Optimal} algorithms for {$2\times 2$}-matrix multiplication.
\newblock {\em Theoret. Comput. Sci.}, 7(2):127--148, 1978.
\newblock \href {http://dx.doi.org/10.1016/0304-3975(78)90045-2}
  {\path{doi:10.1016/0304-3975(78)90045-2}}.

\bibitem[DS13]{stothers}
A.~M. Davie and A.~J. Stothers.
\newblock Improved bound for the complexity of matrix multiplication.
\newblock {\em Proc. Roy. Soc. Edinburgh Sect. A}, 143:351--369, 2013.
\newblock \href {http://dx.doi.org/10.1017/S0308210511001648}
  {\path{doi:10.1017/S0308210511001648}}.

\bibitem[Gas71]{gastinel}
N.~Gastinel.
\newblock Sur le calcul des produits de matrices.
\newblock {\em Numer. Math.}, 17:222--229, 1971.
\newblock \href {http://dx.doi.org/10.1007/BF01436378}
  {\path{doi:10.1007/BF01436378}}.

\bibitem[GK00]{GK}
Ann~Q. Gates and Vladik Kreinovich.
\newblock Strassen's algorithm made (somewhat) more natural: A pedagogical
  remark.
\newblock Technical Report 502, University of Texas at El Paso, Department of
  Computer Science, 2000.
\newblock URL: \url{http://digitalcommons.utep.edu/cs_techrep/502}.

\bibitem[GM16]{GM}
Joshua~A. Grochow and Cristopher Moore.
\newblock Matrix multiplication algorithms from group orbits.
\newblock \arXiv{1612.01527} [cs.CC], 2016.

\bibitem[IL17]{ikenmeyer-lysikov}
Christian Ikenmeyer and Vladimir Lysikov.
\newblock Strassen's $2 \times 2$ matrix multiplication algorithm: a conceptual
  perspective.
\newblock \arXiv{1708.08083v1} [csDS], 2017.

\bibitem[Lan08]{landsbergSurvey}
J.~M. Landsberg.
\newblock Geometry and the complexity of matrix multiplication.
\newblock {\em Bull. Amer. Math. Soc. (N.S.)}, 45(2):247--284, 2008.
\newblock \href {http://dx.doi.org/10.1090/S0273-0979-08-01176-2}
  {\path{doi:10.1090/S0273-0979-08-01176-2}}.

\bibitem[Lan14]{landsbergLB}
J.~M. Landsberg.
\newblock New lower bounds for the rank of matrix multiplication.
\newblock {\em SIAM J. Comput.}, 43:144--149, 2014.
\newblock \href {http://dx.doi.org/10.1137/120880276}
  {\path{doi:10.1137/120880276}}.

\bibitem[LG14]{LeGall}
Fran\c{c}ois Le~Gall.
\newblock Powers of tensors and fast matrix multiplication.
\newblock In {\em Proceedings of the 39th International Symposium on Symbolic
  and Algebraic Computation}, ISSAC '14, pages 296--303, New York, NY, USA,
  2014. ACM.
\newblock Full version available as \arXiv{1401.7714} [cs.DS].
\newblock \href {http://dx.doi.org/10.1145/2608628.2608664}
  {\path{doi:10.1145/2608628.2608664}}.

\bibitem[LM16]{landsbergMichalekLB1}
J.~M. Landsberg and Mateusz Micha{\l}ek.
\newblock On the geometry of border rank algorithms for matrix multiplication
  and other tensors with symmetry.
\newblock \arXiv{1601.08229} [math.AG], 2016.

\bibitem[LR13]{liptonRegan}
Richard~J. Lipton and Kenneth~W. Regan.
\newblock Volker {Strassen}: Amazing results.
\newblock In {\em People, Problems, and Proofs: Essays from G{\"o}del's Lost
  Letter: 2010}, pages 75--78. Springer Berlin Heidelberg, 2013.
\newblock Based on blog post available at
  \url{https://rjlipton.wordpress.com/2010/03/27/fast-matrix-products-and-other-amazing-results/}.
\newblock \href {http://dx.doi.org/10.1007/978-3-642-41422-0_13}
  {\path{doi:10.1007/978-3-642-41422-0_13}}.

\bibitem[LR16]{landsbergRyder}
J.~M. Landsberg and Nicholas Ryder.
\newblock On the geometry of border rank algorithms for {$n \times 2$} by {$2
  \times 2$} matrix multiplication.
\newblock {\em Exper. Math.}, 2016.
\newblock In press. Preprint available as \arXiv{1509.08323} [cs.NA].
\newblock \href {http://dx.doi.org/10.1080/10586458.2016.1162230}
  {\path{doi:10.1080/10586458.2016.1162230}}.

\bibitem[Min15]{minz}
Jacob Minz.
\newblock Derivation of {Strassen's} algorithm for the multiplication of $2
  \times 2$ matrices.
\newblock Blog post at
  \url{http://jacobminz.blogspot.in/2015/05/derivation-of-strassens-algorithm-for.html},
  2015.

\bibitem[Pat09]{paterson}
Mike Paterson.
\newblock Strassen symmetries.
\newblock Presentation at {Leslie} {Valiant's} 60th birthday celebration, May
  2009.
\newblock URL: \url{http://www.cis.upenn.edu/~mkearns/valiant/paterson.ppt}.

\bibitem[San09]{sankowski}
Piotr Sankowski.
\newblock Maximum weight bipartite matching in matrix multiplication time.
\newblock {\em Theoret. Comput. Sci.}, 410(44):4480--4488, 2009.
\newblock \href {http://dx.doi.org/10.1016/j.tcs.2009.07.028}
  {\path{doi:10.1016/j.tcs.2009.07.028}}.

\bibitem[Str69]{strassen}
Volker Strassen.
\newblock Gaussian elimination is not optimal.
\newblock {\em Numer. Math.}, 13:354--356, 1969.
\newblock \href {http://dx.doi.org/10.1007/BF02165411}
  {\path{doi:10.1007/BF02165411}}.

\bibitem[Wil12]{VW}
Virginia~Vassilevska Williams.
\newblock Multiplying matrices faster than {Coppersmith}--{Winograd}.
\newblock In {\em \STOC{12}{44th}}, pages 887--898, New York, NY, USA, 2012.
  ACM.
\newblock \href {http://dx.doi.org/10.1145/2213977.2214056}
  {\path{doi:10.1145/2213977.2214056}}.

\bibitem[Yuv78]{yuval}
Gideon Yuval.
\newblock A simple proof of {Strassen's} result.
\newblock {\em Inform. Process. Lett.}, 7(6):285--286, 1978.
\newblock \href {http://dx.doi.org/10.1016/0020-0190(78)90018-2}
  {\path{doi:10.1016/0020-0190(78)90018-2}}.

\bibitem[Zwi02]{zwick}
Uri Zwick.
\newblock All pairs shortest paths using bridging sets and rectangular matrix
  multiplication.
\newblock {\em J. ACM}, 49(3):289--317 (electronic), 2002.
\newblock \href {http://dx.doi.org/10.1145/567112.567114}
  {\path{doi:10.1145/567112.567114}}.

\end{thebibliography}
\end{document}